\crefname{equation}{}{}
\crefname{algocf}{Algorithm}{Algorithms}
\crefname{equation}{}{} 
\colorlet{refkey}{orange!20}
\colorlet{labelkey}{blue!30}
\crefname{algocf}{Algorithm}{Algorithms}
\numberwithin{equation}{section}
\newtheorem{theorem}{Theorem}[section]
\newtheorem{lemma}[theorem]{Lemma}
\crefname{claim}{Claim}{Claims}
\newtheorem{corollary}[theorem]{Corollary}
\newtheorem*{question*}{Question}
\theoremstyle{definition}
\newtheorem{definition}[theorem]{Definition}
\newtheorem*{definition*}{Definition}
\theoremstyle{remark}
\newtheorem*{remark}{Remark}
\newcommand{\norm}[1]{\left\lVert#1\right\rVert}
\newcommand{\snorm}[1]{\lVert#1\rVert}
\newcommand{\mb}{\mathbb}
\newcommand{\mc}{\mathcal}
\newcommand{\mr}{\mathrm}
\newcommand{\Balance}{\textsc{Balance}}
\renewcommand{\l}{\langle}
\renewcommand{\r}{\rangle}
\newcommand{\eps}{\varepsilon}
\newcommand{\assign}{\leftarrow}
\newcommand{\pe}{\preceq}
\newcommand{\se}{\succeq}
\renewcommand{\frak}{\textbf}
\newcommand{\komlos}{Koml\'{o}s}
\newcommand{\disc}{\mr{disc}}
\title{Discrepancy Minimization via a Self-Balancing Walk}
\author[A1]{Ryan Alweiss}
\address{Department of Mathematics, Princeton University, Princeton, NJ, 08544, USA}
\author[A2]{Yang P. Liu}
\address{Department of Mathematics, Stanford University,
Stanford, CA 94305, USA}
\author[A3]{Mehtaab Sawhney}
\address{Department of Mathematics, Massachusetts Institute of Technology, Cambridge, MA 02139, USA}
\email{alweiss@princeton.edu, yangpliu@stanford.edu, msawhney@mit.edu}
\begin{document}
\begin{abstract}
We study discrepancy minimization for vectors in $\mb{R}^n$ under various settings. The main result is the analysis of a new simple random process in multiple dimensions through a comparison argument. As corollaries, we obtain bounds which are tight up to logarithmic factors for several problems in online vector balancing posed by Bansal, Jiang, Singla, and Sinha (STOC 2020), as well as linear time algorithms for logarithmic bounds for the Koml\'{o}s~conjecture.
\end{abstract}

\maketitle

\section{Introduction}
We start with discussing the \emph{vector balancing problem} -- given vectors $v_1, v_2, \cdots, v_t \in \mb{R}^n$, pick signs $\eps_1, \eps_2, \cdots, \eps_t \in \{-1, 1\}$ so that the \emph{discrepancy} $\|\sum_{i=1}^t \eps_iv_i\|_\infty$ is as small as possible. This problem encompasses many known problems in discrepancy theory, including the \komlos~conjecture and minimizing set discrepancy. Concretely, the \komlos~conjecture asks for the best bound $B(n,t)$ such that for any matrix $A \in \mb{R}^{n \times t}$ with columns with $\ell_2$-norm norm at most $1$, there is a signing $\eps \in \{-1,1\}^t$ with $\|A\eps\|_\infty \le B(n,t)$. The \komlos~conjecture states that one may take $B(n,t) = O(1)$. The best known bound is $B(n,t)=O(\sqrt{\log \min(n,t)})$ due to Banaszczyk \cite{Ban98,Ban12}. Similarly, the problem of minimizing set discrepancy considers the case where $v_1, v_2, \cdots, v_t \in \{0,1\}^n$. Here, Spencer's famous ``six standard deviations suffice'' \cite{Spe85} shows that there exists a signing $\eps_1, \eps_2, \cdots, \eps_t \in \{-1, 1\}$ so that $\|\sum_{i=1}^t \eps_iv_i\|_\infty = O(\sqrt{t\max(1,\log(t/n))})$. In particular, if $t = n$, then discrepancy $6\sqrt{n}$ is achievable, and this is tight up to the constant $6$.

While the original proofs of Banaszczyk and Spencer \cite{Ban98,Ban12,Spe85} were nonconstructive, there has been significant interest in finding algorithmic versions.
The first major results in this direction, which gave polynomial time algorithms for Spencer's result \cite{Spe85}, were achieved by Bansal \cite{Bansal10} and Lovett-Meka \cite{LM15}. Since then, there have been several other constructive discrepancy minimization algorithms \cite{Rothvoss17,ES18,BDG16,BG17,BDGL18,DNTT18}, including those matching Banaszczyk's bound for the \komlos~conjecture \cite{Ban98,Ban12} due to Bansal, Dadush, Garg \cite{BDG16} and Bansal, Dadush, Garg, Lovett \cite{BDGL18}. However, these algorithms do not currently seem to extend to an online setting.

\subsection{Online algorithms.}
The problem of \emph{online discrepancy minimization}, proposed by Spencer \cite{Spencer77}, is to assign weights $\eps_i \in \{-1, 1\}$ to vectors $v_1, v_2, \dots, v_t \in [-1,1]^n$ which arrive one at a time, while trying to maintain a low $\ell_\infty$ norm of all the partial sums $w_i = \sum_{j=1}^i \eps_jv_j.$ Against adaptive adversaries, the best possible bound is $\Omega(\sqrt{t})$ as the adversary may choose the next vector $v_{i+1}$ to be orthogonal to $w_i$. Furthermore, a random signing achieves a bound of $O(\sqrt{t \log n})$.

However, it was open whether one could get any improvement over the trivial $O(\sqrt{t \log n})$ bound in the \emph{oblivious} version of this vector balancing problem. Here, an adversary fixes vectors $v_1,\ldots,v_t \in [-1,1]^n$ ahead of time, and the player may use randomness. No deterministic algorithm can do better than $\Omega(\sqrt{t})$ as this is essentially equivalent to the case of an adaptive adversary.  Our main result provides a probabilistic algorithm that achieves an $\ell_\infty$ bound of $O(\max_{i\in[t]} \|v_i\|_2 \log(nt/\delta))$ with probability at least $1-\delta$ for any $\delta > 0$. For $v_i$ in $[-1,1]^{n}$, our result, assuming no further structure on the $v_i$, gives a bound of $O(\sqrt{n} \log(nt/\delta)).$

The best previous results in this direction instead required that $v_i$ were sampled in an iid manner from a distribution $\frak{p}$ which is known beforehand. In this direction there are a number of previous works each achieving different garuantees. In the most restrictive setting where the vectors $v_i$ were sampled uniformly from $[-1,1]^n$, Bansal and Spencer \cite{BS19} showed an $\ell_{\infty}$ guarantee of $O(\sqrt{n} \log t)$. In the more general setting where when the vectors $v_i$ were sampled from $\frak{p}$ supported on $[-1,1]^n$, Aru, Narayanan, Scott, and Venkatesan \cite{ANSV18} achieved a bound of $O_n(\sqrt{\log t})$ (where the implicit dependence on $n$ is super-exponential) and Bansal, Jiang, Singla, and Sinha \cite{BJSS19} achieved a $\ell_{\infty}$ guarantee of $O(n^2\log(nt))$.

\subsection{Algorithm motivation and description.}
We now motivate and describe the self-balancing walk which is at the heart of the paper. Let the input vectors be $v_1, \cdots, v_t$, where $\|v_i\|_2 \le 1$ for all $i \in [t]$, and let $\eps_1, \cdots, \eps_t \in \{-1,1\}$ be chosen later. The algorithm maintains the current partial sum $w_{i-1} = \eps_1v_1 + \cdots + \eps_{i-1}v_{i-1}$, and will decide the sign of $v_i$ probabilistically depending on $w_{i-1}$ and $v_i$. First, it is natural that the algorithm should pick $\eps_i = -1, 1$ with probability $1/2$ each if $w_{i-1}$ and $v_i$ are orthogonal vectors. Additionally, the more correlated $v_i$ and $w_{i-1}$ are, the higher probability that the algorithm picks $\eps_i = -1.$

A natural choice is for the algorithm to pick $\eps_i = 1$ with probability $\frac12 - \frac{\l w_{i-1},v_i\r}{2c},$ where $c = 30\log(nt/\delta)$ is a constant upper bound on $|\l w_{i-1},v_i\r|$ with high probability. This way, if $w_{i-1}$ and $v_i$ are orthogonal, the algorithm picks $\eps_i = -1, 1$ equiprobably. Additionally, the bias is stronger further from the origin, and it is stronger if one of the signings reduces the norm by a larger amount. The fact that the probability $\frac12 - \frac{\l w_{i-1},v_i\r}{2c}$ is linear in $w_{i-1}$ is important for our analysis, hence why we must choose this parameter $c$ in the algorithm.

\begin{algorithm}[h]
\caption{$\Balance(v_1,\cdots,v_t,\delta)$ -- takes a sequence of input vectors $v_1,\cdots,v_t$ and assigns them $\pm 1$ signs online to maintain low discrepancy with probability $1-\delta$.}
$w_0 \assign 0.$ \\
$c \assign 30 \log(nt/\delta).$ \\
\For{$1 \le i \le t$}{
    \If{$|\l w_{i-1},v_i \r| > c$ \emph{or} $\|w_{i-1}\|_\infty > c$ }{
        \textbf{Fail}. \Comment{Algorithm terminates with failure.} \label{line:fail} \\
    }
    $p_i \assign \frac{1}{2} - \frac{\l w_{i-1}, v_i\r}{2c}.$ \label{line:prob-init}\\
    $\eps_i \assign 1$ with probability $p_i$, and $\eps_i \assign -1$ with probability $1-p_i.$ \label{line:prob} \\
    $w_i \assign w_{i-1} + \eps_iv_i.$ \label{line:move}
}
\label{algo:balance}
\end{algorithm}

Our main result is that \cref{algo:balance} maintains low discrepancy with high probability.
\begin{theorem}
\label{thm:balance}
For any vectors $v_1, v_2, \cdots, v_t \in \mb{R}^n$ with $\|v_i\|_2 \le 1$ for all $i \in [t]$, algorithm $\Balance(v_1, \cdots, v_t, \delta)$ maintains $\|w_i\|_\infty = O\left(\log(nt/\delta)\right)$ for all $i \in [t]$ with probability $1-\delta$.
\end{theorem}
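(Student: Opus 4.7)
The plan is to analyze \cref{algo:balance} by tracking, for each fixed test vector $u \in \mb{R}^n$, the scalar quantity $\langle w_i, u\rangle$ via a martingale decomposition. The key observation is that as long as the algorithm has not failed, $\mb{E}[\eps_i \mid w_{i-1}] = -\langle w_{i-1}, v_i\rangle/c$, so setting $\xi_i := \eps_i - \mb{E}[\eps_i \mid w_{i-1}]$ (a bounded martingale difference with $|\xi_i| \le 2$), the update becomes
\begin{equation*}
w_i \;=\; A_i w_{i-1} + \xi_i v_i, \qquad A_i := I - v_i v_i^\top / c.
\end{equation*}
Unrolling with $w_0 = 0$ and using that each $A_i$ is symmetric yields, for any test vector $u$,
\begin{equation*}
\langle w_i, u\rangle \;=\; \sum_{k=1}^i \xi_k \, y_k, \qquad y_k := \langle v_k, u^{(k)}\rangle, \quad u^{(i)} := u, \quad u^{(k-1)} := A_k u^{(k)}.
\end{equation*}
Since $v_1,\dots,v_t$ are fixed in advance (oblivious adversary), the coefficients $y_k$ are deterministic, so $\langle w_i, u\rangle$ is exactly a sum of bounded martingale differences with deterministic weights.

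Next, I would control $\sum_k y_k^2$, which serves as the variance proxy for Azuma-Hoeffding. Expanding $u^{(k-1)} = u^{(k)} - (y_k/c) v_k$ gives $\|u^{(k-1)}\|_2^2 = \|u^{(k)}\|_2^2 - (y_k^2/c)(2 - \|v_k\|_2^2/c)$, and since $\|v_k\|_2 \le 1 \le c$ the parenthesized factor is at least $1$. Telescoping from $k=1$ to $i$ yields the crucial bound $\sum_k y_k^2 \le c\|u\|_2^2$. Combined with $|\xi_k y_k| \le 2|y_k|$, Azuma-Hoeffding gives
\begin{equation*}
\mb{P}\sqb{|\langle w_i, u\rangle| > T} \;\le\; 2 \exp\!\paren{-\tfrac{T^2}{8 c \|u\|_2^2}}.
\end{equation*}
Taking $u = e_j$ and $T = \Theta(\log(nt/\delta))$ so that the right-hand side is at most $\delta/(3nt)$, then union-bounding over $i \in [t]$ and $j \in [n]$, gives $\|w_i\|_\infty = O(\log(nt/\delta))$ with the desired probability.

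The main obstacle is that the decomposition relies on the identity $\mb{E}[\eps_i \mid w_{i-1}] = -\langle w_{i-1}, v_i\rangle/c$, which holds only before the algorithm fails. This is handled by a standard stopping-time argument: let $\tau$ be the first step at which either failure condition triggers; then the stopped sums $\sum_{k=1}^{i \wedge \tau} \xi_k y_k$ remain bounded martingales, so the Azuma tail bound above applies to them unchanged. To bound $\mb{P}[\tau \le t]$, I would apply the same concentration with $u = v_i$ (to control $|\langle w_{i-1}, v_i\rangle|$) and with $u = e_j$ (to control $\|w_{i-1}\|_\infty$), now with threshold $T = c$. Since $c = 30\log(nt/\delta)$ is a large constant multiple of $T$, each such tail probability is at most $\delta/(3nt)$, and a union bound gives $\mb{P}[\tau \le t] \le \delta/3$. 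Combined with the previous step, \cref{algo:balance} succeeds and satisfies $\|w_i\|_\infty = O(\log(nt/\delta))$ for all $i$ with probability at least $1-\delta$.
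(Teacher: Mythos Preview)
Your approach is correct and takes a genuinely different route from the paper. The paper introduces a distributional comparison (``spread'', i.e., convex-order domination): it shows inductively that the law of $w_i$ is dominated by $\mc{N}(0, M_i)$ where $M_i = A_i M_{i-1} A_i + Lv_iv_i^\top$, proves $M_i \preceq LcI$, and then transfers Gaussian tail bounds via Jensen's inequality. You instead unroll the linear recursion to express $\langle w_i, u\rangle$ as a martingale with \emph{deterministic} coefficients $y_k$ and apply Azuma--Hoeffding directly; your telescoping bound $\sum_k y_k^2 \le c\|u\|_2^2$ is exactly the scalar form of the paper's matrix bound (indeed $\sum_{k\le i} A_i\cdots A_{k+1} v_k v_k^\top A_{k+1}\cdots A_i = M_i/L$). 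Your argument is more elementary, bypassing the coupling and spread lemmas entirely, while yielding the same subgaussian constant; the paper's distributional viewpoint is perhaps cleaner in handling the failure event (it moves corrupted mass to $0$ and uses symmetry rather than a stopping time) and packages the conclusion as a statement about the law of $w_i$ itself rather than one test direction at a time. One small point on your stopping-time step: the stopped sum $\sum_{k\le i\wedge\tau}\xi_k y_k$ does not literally recover $\langle w_{i-1}, u\rangle$ on $\{\tau\le i\}$, since the weights $y_k$ are tied to the chosen target time; but this is harmless---setting $\hat\xi_k := \xi_k\one[k<\tau]$ (still a mean-zero, bounded martingale difference because $\{\tau>k\}$ is determined by $\eps_1,\dots,\eps_{k-1}$) gives a full sum $\sum_{k\le i}\hat\xi_k y_k$ that equals $\langle w_i,u\rangle$ on $\{\tau>i\}$, and Azuma applies to it unconditionally with the same variance proxy.
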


We also note that this theorem is sharp up to logarithmic factors in $n$ and $t$ due to a lower bound of $\Omega(\sqrt{\log t/\log \log t})$ given in \cite{BJSS19}. It seems possible to the authors that a variant of \cref{algo:balance} can maintain an $\ell_\infty$ bound of $O(\sqrt{\log nt})$ instead of $O(\log nt)$. This is an interesting open problem.

\subsection{Consequences of \cref{thm:balance}}
\cref{algo:balance} works against oblivious adversaries. Therefore, \cref{thm:balance} implies tight bounds up to logarithmic factors in $n$ and $t$ for all of Questions 1-5 in \cite{BJSS19}. We state Questions 4 and 5, which are about oblivious adversaries, as these generalize the stochastic and prophet models discussed in the other questions raised in \cite{BJSS19}.
\begin{itemize}
    \item  \cite[\S8, Question~4]{BJSS19} Is there an online algorithm which maintains discrepancy $\mr{poly}(n, \log t)$ on any sequence of vectors in $[-1,1]^n$ chosen by an oblivious adversary?
    \item \cite[\S8, Question~5]{BJSS19} Is there an online algorithm which maintains discrepancy $\mr{poly}(s, \log n,$ $\log t)$ on any sequence of $s$-sparse vectors in $[-1,1]^n$ chosen by an oblivious adversary?
\end{itemize}
In fact, \cref{thm:balance} directly implies a nearly tight bound of $O(\sqrt{n} \log(nt))$ for Question 4, and $O(\sqrt{s} \log(nt))$ for Question 5.

We also get improved bounds to the online geometric discrepancy problems of online interval discrepancy and online Tusn\'{a}dy's problem by using a simplified version of the reduction to vector balancing given in \cite{BJSS19}. This is discussed in \cref{sec:applications}.

Finally we also obtain linear time algorithms for logarithmic bounds for the \komlos~conjecture. In what follows, $\mr{nnz}(A)$ denotes the number of nonzero entries in the matrix $A$.
\begin{theorem}
\label{thm:komlos}
Given a matrix $A \in \mb{R}^{n \times t}$ with columns with $\ell_2$-norm at most $1$, we can find with high probability in $O(\mr{nnz}(A))$ time a vector $x \in \{-1,1\}^t$ such that $\|Ax\|_\infty = O(\sqrt{\log t \cdot \log n})$.
\end{theorem}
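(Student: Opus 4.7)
The plan is to reduce the Komlos instance to one where \cref{thm:balance} can be applied more favorably by first aggregating columns into a smaller set of super-vectors via random signs. Concretely, I would partition the $t$ columns into $T = t/k$ groups $G_1,\dots,G_T$ of size $k$ (with $k$ a parameter chosen below), and in each $G_j$ pick independent Rademacher signs $\sigma_i \in \{\pm 1\}$ and form the super-vector $u_j := \sum_{i \in G_j} \sigma_i v_i$. Then I would run \cref{thm:balance} on the super-vectors $u_1,\dots,u_T$ (suitably rescaled) to obtain group-level signings $\delta_j \in \{\pm 1\}$, and output $x_i := \delta_{j(i)}\sigma_i$, where $j(i)$ is the group containing $i$. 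The identity $Ax = \sum_j \delta_j u_j$ reduces the Komlos problem on $A$ to a discrepancy problem on the $T$ super-vectors.

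First I would establish that $\|u_j\|_2 \le C\sqrt{k}$ for every $j$ with probability $1 - \delta/2$, using $\mathbb{E}\|u_j\|_2^2 = \sum_{i \in G_j} \|v_i\|_2^2 \le k$ together with Hanson--Wright concentration of the quadratic form $\sum_{i,i' \in G_j} \sigma_i\sigma_{i'}\langle v_i, v_{i'}\rangle$. Conditioning on this event, I would apply \cref{thm:balance} to the rescaled vectors $u_j/(C\sqrt{k})$ with failure parameter $\delta/2$, obtaining $\|\sum_j\delta_j u_j\|_\infty = O(\sqrt{k}\,\log(nt/(k\delta)))$. The total runtime is $O(\mr{nnz}(A))$: forming the super-vectors touches each column once, and running \cref{thm:balance} on the $T$ super-vectors costs $O(\sum_j \mr{nnz}(u_j)) \le O(\mr{nnz}(A))$, since each original nonzero propagates to at most one super-vector.

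The main obstacle is that the naive bound $\sqrt{k}\log(nt/k)$ is minimized at $k=1$, where it merely recovers the $O(\log(nt))$ guarantee of \cref{thm:balance}. To close the gap to $O(\sqrt{\log t \log n})$, I would need to refine the analysis of Balance applied to the super-vectors, exploiting that each coordinate $(u_j)_l$ is sub-gaussian with variance proxy $\sum_{i\in G_j}(v_i)_l^2$, which averages to $\|A_l\|_2^2/T$ and is typically much smaller than $k$. The key technical step is to upgrade the single-coordinate analysis underlying \cref{thm:balance} to a per-coordinate sub-gaussian statement governed by row-norms of the super-vector matrix rather than by its column $\ell_2$-norm; with this in hand, choosing $k$ of order $\log t$ should balance the sub-gaussian averaging within groups against the \cref{thm:balance} discrepancy term and deliver the claimed $O(\sqrt{\log t \log n})$ bound in $O(\mr{nnz}(A))$ time.
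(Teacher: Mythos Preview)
Your proposal has a genuine gap. You yourself recognize that the grouping step buys nothing: the bound $\sqrt{k}\,\log(nt/k)$ is minimized at $k=1$, which is just \cref{thm:balance} as a black box. Everything then rests on your final paragraph, where you say the ``key technical step'' is to upgrade the single-coordinate analysis of \Balance~to a per-coordinate sub-gaussian statement governed by row-norms of the super-vector matrix. But you do not carry this out, and it is not clear how the random-sign grouping would help you do so: the row-norm structure you want to exploit is already present in the original matrix $A$ without any aggregation. The proposal is therefore a plan whose only non-trivial step is deferred.

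The paper's proof is in fact much simpler and requires no grouping. The analysis behind \cref{thm:balance} (specifically \cref{lemma:spreadmain} and \cref{cor:covar}) already shows that the distribution of $w_i$ is spread by $\mc{N}(0,M_i)$ with $M_i \preceq LcI$, so every coordinate of $w_i$ is $O(\sqrt{c})$-sub-gaussian. The factor of $n$ inside the logarithm in \cref{thm:balance} comes solely from enforcing $\|w_i\|_\infty \le c$ at \emph{every} step. For the offline \komlos~problem one only needs $\|w_t\|_\infty$ small at the end, so the paper simply drops the running $\ell_\infty$ check and takes $c = \Theta(\log t)$: the algorithm fails with probability at most $\delta/t$ per step (union bound over $t$ steps), and at the final step each coordinate satisfies $\Pr[|\langle w_t,e_j\rangle| > \sqrt{8cL\log n}] \le 2/n^2$, giving $\|Ax\|_\infty = O(\sqrt{c\log n}) = O(\sqrt{\log t\,\log n})$ after a union bound over $n$ coordinates. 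The $O(\mr{nnz}(A))$ runtime is immediate since each step of \Balance~costs $O(\mr{nnz}(v_i))$. In short, the ``per-coordinate sub-gaussian upgrade'' you are looking for is already built into the spreading framework; the only missing observation is that one can decouple the parameter $c$ (which must absorb a union bound over $t$ inner-product checks) from the final $\ell_\infty$ threshold (which only needs a union bound over $n$ coordinates once).
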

This requires a minor modification of \cref{algo:balance} which we sketch at the end of \cref{sec:analysis}. Previous constructive discrepancy minimization algorithms \cite{Bansal10,LM15,BDG16,ES18,BDGL18} involved expensive linear algebra or solving linear or semidefinite programs, although \cite{BDG16,BDGL18} achieve stronger bounds of $\|Ax\|_\infty = O(\sqrt{\log t}), O(\sqrt{\log n})$ respectively. This result therefore can be seen as a stepping stone towards giving input-sparsity time algorithms for discrepancy problems, a direction mentioned by Dadush \cite{DadWeb}.

\subsection{Previous approaches for algorithmic discrepancy minimization.} Here we describe previous approaches to algorithm discrepancy minimization and the difficulties in extending previous methods to the online setting against oblivious adversaries. Previous approaches \cite{Bansal10,LM15,ES18,BDG16,BDGL18} either solve linear or semidefinite programs or perform random walks on the sign vector $(\eps_1, \cdots, \eps_t)$, all which require knowing all input vectors $v_1, \cdots, v_t$ at the beginning.

The results of \cite{BS19} and \cite{BJSS19} work for the restricted online setting where all vectors come from a fixed distribution $\frak{p}$ and work by choosing the sign $\eps_i = -1, 1$ to minimize a potential function of the current point such as $\Phi(w) = \sum_{i=1}^n \cosh(\lambda w_i)$. This approach has significant difficulties working in the setting of oblivious adversaries as algorithms minimizing potentials are deterministic, and a lower bound of $\Omega(t^{1/2})$ holds against any deterministic algorithm against oblivious adversaries.

\subsection{Overview of analysis of \cref{algo:balance}}
A natural approach to analyzing \cref{algo:balance} would be to show that some potential function or exponential moment is increasing slowly in expectation, as is done with several analyses of (sub)martingales. However, this mode of analysis encounters significant difficulties due to the fact that the partial sum $w_{i-1}$ and $v_i$ might be orthogonal. This prevents us from arguing that some potential function is pointwise a (sub)martingale.

We instead maintain a distributional guarantee that the distribution of $w_i$ over executions of \cref{algo:balance} is less ``spread" out than an associated normal distribution. This allows us to transfer the tail bounds on normal distributions to the distribution of $w_i$.

\subsection{Preliminaries and conventions.}
For a vector $v$, we let $v(i)$ denote the $i$-th coordinate of $v$. For positive semidefinite matrices $A, B$ we write $A \pe B$ if $B-A$ is positive semidefinite. For a positive semidefinite matrix $M \in \mb{R}^{m \times m}$ we define $\mc{N}(0,M)$ as the normal distribution with covariance $M$, i.e. $\mathbb{E}_{x\sim \mc{N}(0,M)}[x_ix_j]=M_{ij}$. For a subset $S \subseteq \mb{R}^k$, we write $1_S$ for the indicator function of the set $S$.

\subsection{Concurrent and Independent Work}
In independent and concurrent work, Bansal, Jiang, Meka, Singla, and Sinha \cite{BJMSS20} (building on the techniques of \cite{BJSS19}) achieve similar guarantees to the present work (with worse poly-log factors) for the online \komlos~problem restricted to the setting where vectors are sampled randomly from a fixed distribution $\frak{p}$ inside the unit sphere. However \cite{BJMSS20} uses potential based techniques as in \cite{BJSS19} and thus their results due not extend to minimizing discrepancy in the (more general) oblivious adversary model which is the primary focus of this paper. They also consider two extensions of this problem. In the first they consider a more general problem of splitting a set of incoming vectors (drawn from a stochastic distribution) into $k$ families such that the discrepancy between any two families is small. In the second they prove a more general result showing that one can balance vectors chosen from a known distribution $\frak{p}$ inside the unit sphere against an arbitrary norm induced by a symmetric convex body $K$ with Gaussian measure at least $1/2$. We believe our methods extend to these settings and plan on addressing this in future work. 

\section{Analysis}\label{sec:analysis}
\subsection{Properties of Spreading}
We now define the key notion for the analysis -- the notion of one random variable being a \emph{spread} of another.
\begin{definition}
We say that random variables $Y$ on $\mb{R}^n$ is a spread of random variable $X$ on $\mb{R}^n$ if there exists a coupling of $X$ and $Y$ such that $\mb{E}[Y|X] = X$.
\end{definition}
The univariate notion of the definition above appears in mathematical economics literature under the name ``mean-preserving spread'' \cite{RS70} and is closely related to ``second-order stochastic dominace'' \cite{HL69,HR69,RS70}. As defined, the name spread may seem unintuitive, but consider the coupling between $X$ and $Y$ such that $\mb{E}[Y|X] = X$. Then the random variable $Y-X$ conditional on $X$ has mean $0$. Thus, if $Y$ is a spread of $X$, one can obtain $Y$ by first sampling $X$, and then adding a mean $0$ random variable $Z$ whose distribution may depend on the specific value of $X$ sampled. Furthermore, since whether $Y$ is a spread of the random variable $X$ only depends on the respective distributions of $Y$ and $X$, we will often refer to distributions as spreads of one another. Equivalently, $X$ and $Y$ can be coupled so that $X,Y$ form a two-step martingale; the former perspective, however, is far more useful here. 
\begin{lemma}
\label{lemma:average}
Let distribution $Y$ be a spread of $X$. For any convex function $\Phi: \mb{R}^n \to \mb{R}$, we have
that $\mb{E}_{x \sim X}\Phi(x) \le \mb{E}_{y \sim Y} \Phi(y).$
\end{lemma}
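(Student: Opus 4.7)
The plan is to apply Jensen's inequality conditionally, using the coupling guaranteed by the definition of spread. By hypothesis, there is a joint distribution on $(X,Y)$ such that the marginal on $X$ matches the given distribution of $X$, the marginal on $Y$ matches the given distribution of $Y$, and $\mathbb{E}[Y \mid X] = X$ almost surely.

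Now, conditional on $X$, the random variable $Y$ is a distribution on $\mb{R}^n$ with mean equal to $X$. Since $\Phi : \mb{R}^n \to \mb{R}$ is convex, Jensen's inequality applied to this conditional distribution gives
\[
\Phi(X) \;=\; \Phi\bigl(\mathbb{E}[Y \mid X]\bigr) \;\le\; \mathbb{E}\bigl[\Phi(Y) \,\big\vert\, X\bigr]
\]
almost surely. Taking expectations over $X$ and using the tower property yields $\mathbb{E}[\Phi(X)] \le \mathbb{E}[\mathbb{E}[\Phi(Y) \mid X]] = \mathbb{E}[\Phi(Y)]$, which is exactly the desired inequality.

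The only genuine subtlety, rather than an obstacle, is ensuring that Jensen's inequality is legitimately applicable in the conditional form. This requires either that $\Phi$ is measurable (automatic for a convex function on $\mb{R}^n$ up to standard caveats at the boundary of its domain) and that the relevant expectations exist. In the applications of this lemma in the paper, $\Phi$ will be something like $\exp(\lambda \ang{u, \cdot})$ or an indicator/Lipschitz variant, so integrability is not an issue. Thus the whole argument is essentially one line: the definition of spread is tailor-made so that conditional Jensen finishes the proof.
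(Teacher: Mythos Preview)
Your proof is correct and is essentially the same as the paper's: both use the coupling from the definition of spread, apply Jensen's inequality to the conditional distribution of $Y$ given $X$, and then take expectations using the tower property.
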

\begin{proof}
Couple $X$ and $Y$ in the manner which demonstrates that $Y$ is a spread of $X$. Then
\begin{align*}
\mb{E}_{y\sim Y}\Phi(y) &= \mb{E}[\Phi(Y)|X] \ge \mb{E}\Phi(\mb{E}[Y|X]) = \mb{E}\Phi(X)
\end{align*}
where we have used Jensen's inequality and that $\mb{E}[Y|X] = X.$
\end{proof}
Spreading is transitive and preserved by linear transformations.
\begin{lemma}[Spreading is transitive] \label{lemma:transitive}
If $Z$ is a spread of $Y$ and $Y$ is a spread of $X$, then $Z$ is a spread of $X$.
\end{lemma}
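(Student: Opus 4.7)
The plan is to construct the desired coupling of $X$ and $Z$ by gluing the two given couplings along their common marginal $Y$, using $Y$ as a ``bridge'' random variable, and then to verify the conditional-expectation identity via the tower property.

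More concretely, I would first invoke the hypothesis to fix a coupling $(X, Y_1)$ on $\mb{R}^n \times \mb{R}^n$ realizing $Y$ as a spread of $X$, so that $\mb{E}[Y_1 \mid X] = X$, and a coupling $(Y_2, Z)$ realizing $Z$ as a spread of $Y$, so that $\mb{E}[Z \mid Y_2] = Y_2$. Since $Y_1$ and $Y_2$ both have the distribution of $Y$, I can form a joint law on triples $(X, Y, Z)$ as follows: sample $(X, Y)$ from the first coupling, then sample $Z$ from the regular conditional distribution of $Z$ given $Y_2 = Y$ coming from the second coupling, independently of $X$ given $Y$. The existence of such a regular conditional distribution is standard for laws on $\mb{R}^n$, so this joint law is well defined, and by construction its $(X,Y)$-marginal agrees with the first coupling while its $(Y,Z)$-marginal agrees with the second. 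In particular, $X$ has its original distribution and $Z$ has its original distribution, so the $(X, Z)$-marginal of this joint law is a coupling of $X$ and $Z$.

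It remains to verify that $\mb{E}[Z \mid X] = X$ under this joint law. By construction $Z$ and $X$ are conditionally independent given $Y$, so $\mb{E}[Z \mid X, Y] = \mb{E}[Z \mid Y] = Y$, where the second equality uses the defining property of the second coupling. Applying the tower property,
\[
\mb{E}[Z \mid X] = \mb{E}\bigl[\mb{E}[Z \mid X, Y] \,\bigm|\, X\bigr] = \mb{E}[Y \mid X] = X,
\]
using the defining property of the first coupling at the final step. Hence the $(X,Z)$-marginal of the constructed joint law is the desired coupling witnessing that $Z$ is a spread of $X$.

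I do not expect any serious obstacle: the only thing to be careful about is making sure that the gluing of the two couplings along the common marginal of $Y$ is rigorously defined, which is the standard disintegration argument for probability kernels on $\mb{R}^n$, and that one correctly exploits conditional independence of $X$ and $Z$ given $Y$ when computing $\mb{E}[Z \mid X]$. Once these are in place, the proof is a short application of the tower property.
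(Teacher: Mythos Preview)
Your argument is correct and is the standard proof: glue the two couplings along their shared $Y$-marginal to form a Markov chain $(X,Y,Z)$, then apply the tower property. The paper itself does not supply a proof of this lemma, treating transitivity (like the companion statement about linear maps) as a routine fact; your write-up fills that in cleanly, with the only delicate point---the existence of the glued joint law via regular conditional distributions on $\mb{R}^n$---correctly flagged.
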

\begin{lemma}[Linear transformations maintain spreading]
\label{lemma:linear}
If $Y$ is a spread of $X$, then for any linear transformation $M$ on $\mb{R}^n$ we have that $MY$ is a spread of $MX.$
\end{lemma}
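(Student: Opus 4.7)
The plan is to transport the witnessing coupling for $X, Y$ through the linear map $M$ and verify that it still satisfies the defining property of a spread. Since $Y$ is a spread of $X$, there exists a joint distribution on $(X, Y)$ such that $\mathbb{E}[Y \mid X] = X$. I would propose the coupling $(MX, MY)$ obtained by applying $M$ to both coordinates of this joint distribution as the candidate witness that $MY$ is a spread of $MX$.

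The verification then proceeds in two small steps. First, by linearity of conditional expectation, from $\mathbb{E}[Y \mid X] = X$ we get
\[
\mathbb{E}[MY \mid X] \;=\; M \, \mathbb{E}[Y \mid X] \;=\; MX.
\]
Second, since $MX$ is a (deterministic) function of $X$, the $\sigma$-algebra generated by $MX$ is contained in that generated by $X$, so by the tower property
\[
\mathbb{E}[MY \mid MX] \;=\; \mathbb{E}\bigl[\mathbb{E}[MY \mid X] \,\big|\, MX\bigr] \;=\; \mathbb{E}[MX \mid MX] \;=\; MX,
\]
which is exactly the condition defining $MY$ as a spread of $MX$.

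There is no real obstacle here: the only thing to be mildly careful about is that we must pass from the identity $\mathbb{E}[MY \mid X] = MX$ to the identity $\mathbb{E}[MY \mid MX] = MX$, since the definition of a spread is phrased in terms of conditioning on the spread-ee itself. As noted, this follows immediately from the tower property because $MX$ is measurable with respect to $X$. No additional assumption on $M$ (such as invertibility) is needed, and no machinery beyond the basic properties of conditional expectation is invoked.
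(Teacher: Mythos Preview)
Your proof is correct. The paper itself states this lemma without proof, treating it as an elementary consequence of the definition; your argument via linearity of conditional expectation together with the tower property is exactly the natural way to fill in the omitted details.
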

The following is a slightly more abstract property of spreading that we need for \cref{thm:balance}.
\begin{lemma}\label{lemma:technical}
Consider random variables $X$, $Y$, $W$, and $Z$. Suppose that $W$ is a spread of $X$ and $Z$ is a fixed random variable such that $Z$ is a spread of the conditional distribution of $Y-X$ given $X =x$ for any value $x$. Then $W+Z$, where $W$ and $Z$ are sampled independently, is a spread of $Y$. 
\end{lemma}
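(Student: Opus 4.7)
The plan is to exhibit an explicit coupling of $Y$ with a random variable $T$ whose unconditional distribution coincides with that of the independent sum of $W$ and $Z$, and then to verify $\mb{E}[T \mid Y] = Y$. I will use $X$ as a pivot to paste the two given spread couplings together.

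Concretely, I construct a joint distribution of $(X, Y, W, Z)$ as follows. First, sample $(X, Y)$ from their given joint distribution. Next, given $(X, Y)$, sample $Z$ from the spread coupling of $Z$ with $(Y-X) \mid X$; by hypothesis this is a coupling in which $\mb{E}[Z \mid X, Y] = Y - X$, and the conditional distribution of $Z$ given $X = x$ is the fixed (unconditional) distribution of $Z$ for every value $x$. Finally, given $X$, sample $W$ from the spread coupling of $X$ and $W$ (so that $\mb{E}[W \mid X] = X$), taken conditionally independent of $(Y, Z)$ given $X$. Set $T := W + Z$.

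The crucial observation is that in this construction $W$ and $Z$ are unconditionally independent. Because the conditional distribution of $Z$ given $X$ does not depend on the value of $X$, $Z$ is unconditionally independent of $X$; combined with the conditional independence of $W$ and $Z$ given $X$ built in by construction, a direct computation using the law of total probability yields unconditional independence of $W$ and $Z$. Consequently $T$ has the distribution of an independent sum of $W$ and $Z$, as required.

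To verify $\mb{E}[T \mid Y] = Y$, I apply the tower property. Since $W$ is conditionally independent of $Y$ given $X$ and $\mb{E}[W \mid X] = X$, one gets $\mb{E}[W \mid Y] = \mb{E}[\mb{E}[W \mid X] \mid Y] = \mb{E}[X \mid Y]$. For the second summand, $\mb{E}[Z \mid X, Y] = \mb{E}[Z \mid X, Y-X] = Y - X$, so $\mb{E}[Z \mid Y] = \mb{E}[Y - X \mid Y] = Y - \mb{E}[X \mid Y]$. Summing gives $\mb{E}[T \mid Y] = Y$. I expect the main subtlety to be the unconditional independence of $W$ and $Z$ in the constructed coupling; this is exactly where the hypothesis that the distribution of $Z$ does not vary with $x$ is essential, since without it the marginal of $T$ could fail to match the independent sum of $W$ and $Z$.
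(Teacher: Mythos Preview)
Your proof is correct and follows essentially the same approach as the paper: both build the coupling by sampling $(X,Y)$, attaching $W$ via the $X$--$W$ spread coupling conditionally independently of $(Y,Z)$ given $X$, and attaching $Z$ via the $(Y-X)$--$Z$ spread coupling for each value of $X$. The only cosmetic difference is that the paper verifies $\mb{E}[W+Z\mid X,Y]=Y$ first and then towers down, whereas you compute $\mb{E}[W\mid Y]$ and $\mb{E}[Z\mid Y]$ separately; both arguments rely on the same conditional independence facts.
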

\begin{remark}
It is implicit in the above definition that $X$ and $Y$ live on the same probability space.
\end{remark}
\begin{proof}
The proof produces the desired coupling between $Y$ and $W+Z$ as follows.
\begin{itemize}
    \item Sample $W$ and $X$ using the coupling between $W$ and $X$ which demonstrates that $W$ is a spread of $X$. 
    \item Then sample $Y$ from the conditional distribution of $Y$ given $X$ so that $W$ and $Y$ are conditionally independent given $X$.
    \item Finally sample $Z$ from the coupling of $Y-X$ and $Z$ (given the value of $X = x$) so that $(Y,Z)$ and $W$ are conditionally independent given $X$.
\end{itemize}
We claim that the marginal distribution of $W+Z$ is as if $W$ and $Z$ were sampled independently. This follows by noting that $W$ and $Z$ are conditionally independent given $X$ in this coupling and that the distribution of $Z$ conditional on $X$ is independent of the value of $X$ by hypothesis. Finally we prove that $W+Z$ is a spread of $Y$. Note that we have
\[\mb{E}[W+Z|X,Y] = \mb{E}[W|X,Y] + \mb{E}[Z|X,Y] = \mb{E}[W|X] + \mb{E}[Z|X,Y-X] = X + (Y-X) = Y\]
where each equality follows by construction. Therefore, $\mb{E}[W+Z|Y] = Y.$ The result follows. 
\end{proof}

Finally, any bounded variable with mean $0$ can be spread by the appropriate normal distribution.
\begin{lemma}[Spreading real variables by Gaussians]\label{lemma:2spread}
Let $X$ be a real-valued random variable with $\mb{E}[X] = 0$ and $|X|\le C$. Then $G=\mc{N}(0,\pi C^2/2)$ is a spread of $X$.
\end{lemma}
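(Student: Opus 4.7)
\smallskip
\noindent\textbf{Proof plan.} The plan is to go through an intermediate two-point distribution. Let $X_0$ be uniform on $\{-C, C\}$. I would argue (i) $X_0$ is a spread of $X$, and (ii) $G$ is a spread of $X_0$, and then invoke transitivity of spreading (\cref{lemma:transitive}) to conclude that $G$ is a spread of $X$.

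\smallskip
\noindent\textbf{Step (i): two-point distribution as a spread of $X$.} Given $X = x$ with $|x| \le C$, I couple by setting $X_0 = C$ with probability $(C+x)/(2C)$ and $X_0 = -C$ with probability $(C-x)/(2C)$; these probabilities lie in $[0,1]$ since $|x| \le C$. A direct computation gives $\mb{E}[X_0 \mid X = x] = x$, so $\mb{E}[X_0 \mid X] = X$. A short check using $\mb{E}[X] = 0$ shows the marginal distribution of $X_0$ is indeed uniform on $\{-C, C\}$: $\Pr[X_0 = C] = \mb{E}[(C + X)/(2C)] = 1/2$.

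\smallskip
\noindent\textbf{Step (ii): Gaussian as a spread of the two-point distribution.} The variance $\pi C^2 / 2$ is chosen precisely so that the absolute moment of $G \sim \mc{N}(0, \pi C^2/2)$ equals $C$: for $\sigma^2 = \pi C^2/2$ one has $\mb{E}|G| = \sigma \sqrt{2/\pi} = C$. The coupling is then $X_0 = C \cdot \mr{sign}(G)$. By symmetry of the Gaussian, $\mb{E}[G \mid \mr{sign}(G) = +1] = \mb{E}|G| = C$ and similarly $\mb{E}[G \mid \mr{sign}(G) = -1] = -C$, so $\mb{E}[G \mid X_0] = X_0$. The marginal distribution of $X_0$ constructed this way is uniform on $\{-C, C\}$ by symmetry of $G$, matching Step (i).

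\smallskip
\noindent\textbf{Conclusion and main obstacle.} Combining (i) and (ii) with \cref{lemma:transitive} yields that $G$ is a spread of $X$. The only non-trivial ingredient is recognizing that the constant $\pi/2$ in the variance is exactly what makes $\mb{E}|G| = C$, so the Gaussian ``looks like'' the extreme two-point distribution in the second moment/first absolute moment sense needed for the coupling; the rest is a routine two-step construction and the transitivity lemma already proved.
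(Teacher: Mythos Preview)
Your proof is correct and follows essentially the same approach as the paper: factor through the symmetric two-point distribution on $\{-C,C\}$, use the coupling $X_0=\pm C$ with probabilities $(C\pm X)/(2C)$ for step (i), and use the sign-of-$G$ coupling together with $\mb{E}|G|=C$ for step (ii), then conclude by transitivity. The only cosmetic differences are that the paper leaves the invocation of transitivity implicit and phrases the second coupling as $Z=\mr{sign}(X_0)\,|G|$ rather than $X_0=C\cdot\mr{sign}(G)$, which is the same coupling viewed from the other side.
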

\begin{proof}
We first prove that the Bernoulli distribution with equal weight on $\pm C$ is a spread of any such variable $X$. To see this, define $Y$ conditional on $X$ to be $C$ with probability $(X + C)/(2C)$ and $-C$ with probability $(-X + C)/(2C)$. Then $Y$ is clearly supported on only $\pm C$ and \[\mb{E}[Y|X] = \frac{C(X+C)}{2C} + \frac{-C(-X+C)}{2C} = X,\] so the Bernoulli distribution with equal weight on $\pm C$ is a spread of $X$.
To see that $G$ is a spread of the Bernoulli distribution with equal weight on $\pm C$, let $Z$ be distributed as $|G|$ if $X = C$ and $-|G|$ otherwise. The result follows by noting that 
\[\mb{E}[|G|] = \sqrt{\frac{\pi}{2}}C\mb{E}_{Z\sim \mc{N}(0,1)}[|Z|] = C\]
and $Z$ is distributed as $G$.
\end{proof}
\subsection{Proof of \cref{thm:balance}}
We now formalize the notion of the distribution at time $i$ induced by \cref{algo:balance}. For $i \in [t]$, this is defined to be the distribution of $w_i$, except with all mass where the algorithm failed (line \ref{line:fail}) moved to $0.$
\begin{definition}[Distribution induced by \Balance]
We define the distribution induced by \\ $\Balance(v_1,\ldots,v_t,\delta)$ recursively.
\begin{itemize}
    \item At $i=0$ we have all mass of the distribution at $0$.
    \item Move all mass with $|\l w_i,v_{i+1} \r| > c$ or $\|w_i\|_\infty > c$ to $0$ -- this mass will stay at $0$ for the remainder of the process. We refer to such $w_i$ as being \emph{corrupted}.
    \item For the remaining mass $i \in [t]$, evolve the distribution according to lines \ref{line:prob}, \ref{line:move}.
\end{itemize}
$\mc{D}_i(\Balance(v_1,\ldots,v_t,\delta))$ will denote the distribution of the vector $w_i$ after $i$ time steps of the above procedure.
\end{definition}
One key observation is that at each stage we have that $\mc{D}_i(\Balance(v_1,\ldots,v_t,\delta))$ is symmetric about the origin. We will ultimately compare $\mc{D}_i(\Balance(v_1,\ldots,v_t,\delta))$ to $\mc{N}(0,M_i)$ for an appropriate set of covariance matrices $M_i$.
\begin{definition}
\label{def:covar}
Fix $L = 2\pi $ and $c\ge 1$. Define $M_0 = 0$. For $i\ge 1$ define $M_i$ inductively as 
\[M_i = (I - c^{-1}v_iv_i^{T})M_{i-1}(I - c^{-1}v_iv_i^{T}) + L v_iv_i^{T}.\]
\end{definition}
We now note that these covariance matrices are pointwise upper bounded independent of time.
\begin{lemma}
\label{lemma:covar}
Let $M \se 0$, $c \ge 1$, $L \ge 0$. If $M \pe LcI$ then for any vector $v \in \mb{R}^n$ with $\|v\|_2 \le 1$
\[ M' = (I-c^{-1}vv^T)M(I-c^{-1}vv^T) + Lvv^T \] satisfies $0 \pe M' \pe LcI.$
\end{lemma}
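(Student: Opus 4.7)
The lower bound $M' \succeq 0$ is immediate: the matrix $I - c^{-1}vv^T$ is symmetric, so $(I-c^{-1}vv^T)M(I-c^{-1}vv^T)$ is PSD whenever $M$ is (it has the form $A^T M A$), and $Lvv^T$ is PSD since $L\ge 0$. So the entire content of the lemma is the upper bound $M' \preceq LcI$.

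For the upper bound, the natural strategy is to test against an arbitrary vector $x \in \mathbb{R}^n$ and reduce the claim $x^TM'x \le Lc\,\|x\|_2^2$ to a scalar inequality. Setting $y := (I-c^{-1}vv^T)x = x - c^{-1}(v^Tx)v$, we have
\[x^T M' x = y^T M y + L(v^Tx)^2,\]
and the hypothesis $M \preceq LcI$ gives $y^TMy \le Lc\,\|y\|_2^2$. After dividing by $L$ (handling $L=0$ separately, in which case $M=0$ forces $M'=0$), it therefore suffices to show
\[c\,\|y\|_2^2 + (v^Tx)^2 \le c\,\|x\|_2^2.\]

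The second step is a direct computation of $\|y\|_2^2$. Expanding $y = x - c^{-1}(v^Tx)v$ yields
\[\|y\|_2^2 = \|x\|_2^2 - 2c^{-1}(v^Tx)^2 + c^{-2}(v^Tx)^2\|v\|_2^2,\]
so
\[c(\|x\|_2^2 - \|y\|_2^2) = (v^Tx)^2\bigl(2 - c^{-1}\|v\|_2^2\bigr).\]
The desired inequality therefore reduces to $(v^Tx)^2 \le (v^Tx)^2(2 - c^{-1}\|v\|_2^2)$, i.e., $c^{-1}\|v\|_2^2 \le 1$. This follows from the hypotheses $\|v\|_2 \le 1$ and $c \ge 1$, which together give $\|v\|_2^2 \le 1 \le c$.

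I do not anticipate a serious obstacle here: the statement is a one-step matrix inequality whose worst direction is $v$ itself, and the bookkeeping above makes the condition $\|v\|_2^2 \le c$ the exact point at which the inequality becomes tight. The slight subtlety is that one might be tempted to argue coordinate-wise in an eigenbasis of $M$, but since $I - c^{-1}vv^T$ does not commute with $M$ in general, testing against a single $x$ and exploiting $M \preceq LcI$ as an operator bound is the cleanest route.
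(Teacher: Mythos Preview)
Your proposal is correct and follows essentially the same argument as the paper: both bound $(I-c^{-1}vv^T)M(I-c^{-1}vv^T)$ by replacing $M$ with $LcI$, expand the resulting rank-one perturbation of $LcI$, and finish with the scalar inequality $2 - c^{-1}\|v\|_2^2 \ge 1$. The only cosmetic difference is that the paper works directly at the operator level (using $A \preceq B \Rightarrow C^TAC \preceq C^TBC$ for symmetric $C$), whereas you test against an arbitrary $x$ and carry out the identical computation in scalar form.
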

\begin{proof}
It is direct that $0 \pe M'$. Note that
\begin{align*} 
M' &\pe (I-c^{-1}vv^T)LcI(I-c^{-1}vv^T) + Lvv^T
\\ &= LcI - L(2-c^{-1}\|v\|_2^2)vv^T + Lvv^T \pe LcI
\end{align*}
as $2-c^{-1}\|v\|_2^2 \ge 2-c^{-1}\ge 1$.
\end{proof}
Applying \cref{lemma:covar} inductively gives us the following immediate corollary.
\begin{corollary}
\label{cor:covar}
For all $i \in [t]$ we have that $0\pe M_i \pe LcI$.
\end{corollary}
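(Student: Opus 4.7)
The plan is a direct induction on $i \in \{0, 1, \dots, t\}$, invoking \cref{lemma:covar} at each step. The statement to establish, $0 \pe M_i \pe LcI$, is exactly the hypothesis-then-conclusion form of \cref{lemma:covar}, so the induction is essentially mechanical once one confirms the base case and checks that the hypotheses of \cref{lemma:covar} are preserved.

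For the base case $i = 0$, we have $M_0 = 0$ by \cref{def:covar}, and since $L, c \ge 0$ (in fact $L = 2\pi$ and $c \ge 1$ by assumption), the relation $0 \pe 0 \pe LcI$ is immediate.

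For the inductive step, assume $0 \pe M_{i-1} \pe LcI$. By \cref{def:covar}, $M_i$ is obtained from $M_{i-1}$ via the transformation
\[
M_i = (I - c^{-1}v_iv_i^T)M_{i-1}(I - c^{-1}v_iv_i^T) + L v_iv_i^T.
\]
Since $v_i$ satisfies $\|v_i\|_2 \le 1$ by the hypothesis of \cref{thm:balance}, and since $c \ge 1$ and $L \ge 0$, we may apply \cref{lemma:covar} with $M = M_{i-1}$ and $v = v_i$; this yields $0 \pe M_i \pe LcI$, completing the induction.

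There is essentially no obstacle here: the lemma was stated in exactly the form needed for this induction, and the condition $\|v_i\|_2 \le 1$ used to invoke it is part of the standing hypothesis on the input vectors throughout \cref{sec:analysis}. The only tiny care needed is to verify that the constants $L = 2\pi$ and $c \ge 1$ from \cref{def:covar} satisfy the numeric assumptions $c \ge 1$ and $L \ge 0$ required by \cref{lemma:covar}, which they plainly do.
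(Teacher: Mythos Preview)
Your proof is correct and follows exactly the approach the paper uses: the paper simply states that the corollary follows by ``applying \cref{lemma:covar} inductively,'' and you have spelled out precisely that induction, including the base case $M_0 = 0$ and the verification that the hypotheses $c \ge 1$, $L \ge 0$, and $\|v_i\|_2 \le 1$ needed for \cref{lemma:covar} are met at each step.
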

The following lemma is the key step in our analysis, where we show that the distribution $\mc{D}_i(\Balance(v_1, \cdots, v_t))$ is spread by normal distributions with covariance matrices defined in \cref{def:covar}. 
\begin{lemma}
\label{lemma:spreadmain}
$\mc{N}(0,M_i)$ is a spread of $\mc{D}_i(\Balance(v_1,\ldots,v_t,\delta))$ for all times $i \in [t]$.
\end{lemma}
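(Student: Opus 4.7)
The plan is to proceed by induction on $i$, with the base case $i=0$ immediate since $M_0 = 0$ and $\mc{D}_0$ is the point mass at $0$, so both are equal.

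For the inductive step, set $A := I - c^{-1}v_iv_i^T$ so that $M_i = AM_{i-1}A + Lv_iv_i^T$, and correspondingly write $\mc{N}(0,M_i)$ as the distribution of $W+Z$ with $W \sim \mc{N}(0,AM_{i-1}A)$ and $Z \sim \mc{N}(0,Lv_iv_i^T)$ independent. The plan is to invoke \cref{lemma:technical} with these choices, setting $X := Aw_{i-1}^c$ and $Y := w_i$, where $w_{i-1}^c$ is the ``post-corruption'' random variable equal to $w_{i-1}$ when $|\langle w_{i-1}, v_i\rangle| \le c$ and $\|w_{i-1}\|_\infty \le c$, and equal to $0$ otherwise. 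Using the definition of $p_i$, a direct computation gives $\mb{E}[w_i \mid w_{i-1}^c] = A w_{i-1}^c$: in the alive non-corrupted case $\mb{E}[\eps_i \mid w_{i-1}] = -c^{-1}\langle v_i, w_{i-1}\rangle$ yields this, while on corrupted or previously-failed mass we have $w_i = 0 = A \cdot 0$. Conditional on $w_{i-1}^c$, the residual $w_i - Aw_{i-1}^c$ is always a scalar multiple of $v_i$ with mean $0$ and magnitude at most $2$ (using $|\langle v_i, w_{i-1}\rangle| \le c$ in the non-corrupted case, with the degenerate cases of failed mass or alive mass at $0$ yielding residuals of $0$ or $\pm v_i$ respectively). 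Hence by \cref{lemma:2spread}, $Z = \mc{N}(0, Lv_iv_i^T)$ with $L = 2\pi$ is a spread of this residual given $w_{i-1}^c$, and by mixing also a spread of the conditional distribution of $Y-X$ given $X = x$ for every $x$.

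The main obstacle is verifying the other hypothesis of \cref{lemma:technical}: that $W$ is a spread of $X$. By \cref{lemma:linear}, this reduces to showing that $\mc{N}(0,M_{i-1})$ is a spread of the law $\mc{D}_{i-1}^c$ of $w_{i-1}^c$, whereas the inductive hypothesis only provides a spread of $\mc{D}_{i-1}$. To bridge this gap, I would prove an auxiliary sub-lemma: if $\mc{D}$ is a symmetric distribution spread by the symmetric Gaussian $\mc{N}(0,M)$, and $\mc{D}'$ is obtained from $\mc{D}$ by moving mass in a symmetric Borel set $B$ to $0$, then $\mc{N}(0,M)$ is also a spread of $\mc{D}'$. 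The proof first symmetrizes the coupling realizing the original spread (which preserves the mean-preserving property under $(X,Y)\mapsto(-X,-Y)$), and then uses that conditional expectations of the Gaussian given the paired value lying in the symmetric set $B \cup \{0\}$ vanish by symmetry, so the mass moved to $0$ retains the correct conditional expectation. A parallel induction confirms that $\mc{D}_{i-1}$ is symmetric: $\mc{D}_0$ is, the identity $p_i(w) + p_i(-w) = 1$ makes the probabilistic move symmetric, and the corruption check is symmetric.

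Combining the sub-lemma with the inductive hypothesis gives that $\mc{N}(0,M_{i-1})$ is a spread of $\mc{D}_{i-1}^c$, and \cref{lemma:linear} then provides $W = \mc{N}(0, AM_{i-1}A)$ as a spread of $X = Aw_{i-1}^c$. Finally, \cref{lemma:technical} yields that $W+Z = \mc{N}(0,M_i)$ is a spread of $Y = w_i$, which has law $\mc{D}_i$, completing the induction.
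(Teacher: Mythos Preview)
Your proposal is correct and follows essentially the same inductive strategy as the paper: decompose $w_i$ as $Aw_{i-1}' + R\cdot v_i$ with $A = I - c^{-1}v_iv_i^T$, use \cref{lemma:linear} and the inductive hypothesis for the first summand, \cref{lemma:2spread} for the second, and glue via \cref{lemma:technical}. The one place where the paper is slicker than your sketch is the handling of corruption: rather than proving your auxiliary sub-lemma directly (symmetrizing the coupling between the Gaussian and $\mc{D}_{i-1}$), the paper simply observes that the symmetric distribution $\mc{D}_{i-1}$ is itself a spread of the post-corruption distribution $\mc{D}_{i-1}'$ (the natural coupling works since the corrupted set is symmetric, so the mass sent to $0$ has conditional mean $0$), and then invokes transitivity (\cref{lemma:transitive}) together with the inductive hypothesis. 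This yields your sub-lemma as an immediate corollary without any need to manipulate the Gaussian coupling.
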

\begin{proof}
For simplicity, we write $\mc{D}_i := \mc{D}_i(\Balance(v_1,\ldots,v_t,\delta))$ throughout the proof. We can compute that if the algorithm does not fail (line \ref{line:fail}) then
\[ \mb{E}[w_i | w_{i-1}] = w_{i-1} + (2p_i-1)v_i = w_{i-1} - c^{-1}v_iv_i^Tw_{i-1} = (I-c^{-1}v_iv_i^T)w_{i-1}. \]
Define $w_{i-1}'$ to be $w_{i-1}$ except when $w_{i-1}$ became corrupted -- in this case set $w_{i-1}'$ to $0$. Let $\mc{D}_{i-1}'$ be the distribution of $w_{i-1}'$. As $\mc{D}_{i-1}$ is symmetric, $\mc{D}_{i-1}$ is a spread of $\mc{D}_{i-1}'$. Define random variables
\[ R(w_{i-1},v_i) = \begin{cases}
0 & \text{ if } w_{i-1} \text{ is corrupted,} \\
1+c^{-1}\l w_{i-1},v_i \r & \text{ with probability } p_i \text{ if } w_{i-1} \text{ is not corrupted,} \\
-1+c^{-1}\l w_{i-1},v_i \r & \text{ with probability } 1 - p_i \text{ if } w_{i-1} \text{ is not corrupted,}
\end{cases}
\]
where $p_i$ is defined in line \ref{line:prob-init}. By definition, $w_i$ is distributed as
\[ (I-c^{-1}v_iv_i^T)w_{i-1}' + R(w_{i-1}',v_i)v_i. \]

By induction and \cref{lemma:linear} we have that \[ \mc{N}(0,(I-c^{-1}v_iv_i^T)M_{i-1}(I-c^{-1}v_iv_i^T)) \] is a spread of $(I-c^{-1}v_iv_i^T)\mc{D}_{i-1}$, which is a spread of $(I-c^{-1}v_iv_i^T)\mc{D}_{i-1}'$ by \cref{lemma:transitive}.

Note that by definition, $R(w_{i-1},v_i)$ is mean $0$ and supported on $[-2,2]$ as $|\l w_{i-1},v_i \r| \le c$ if $w_{i-1}$ is not corrupted. By \cref{lemma:2spread}, we have that $\mc{N}(0,2\pi)$ is a spread of $R(w_{i-1}',v_i)$ for each $w_{i-1}'$. Thus by \cref{lemma:technical}, we have that
\begin{align*}
&\mc{N}(0,(I-c^{-1}v_iv_i^T)M_{i-1}(I-c^{-1}v_iv_i^T)) + \mc{N}(0,2\pi v_iv_i^T) \\
&= \mc{N}(0,(I-c^{-1}v_iv_i^T)M_{i-1}(I-c^{-1}v_iv_i^T) + 2\pi v_iv_i^T) = \mc{N}(0, M_i)
\end{align*} is a spread of $\mc{D}_i$, as desired.
\end{proof}
We can get tail bounds on $M_i$ because they are always bounded (\cref{cor:covar}).
\begin{lemma}
\label{lemma:gauss-bound}
For any vector $u \in \mb{R}^n$ with $\|u\|_2 \le 1$ we have that 
\[\mb{E}_{x\sim \mc{N}(0,M_i)}\big[e^{\l  x,u\r^2/(4Lc)}\big] \le \sqrt{2}.\]
\end{lemma}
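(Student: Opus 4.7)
The plan is to reduce the claim to a one-dimensional Gaussian moment generating function computation. First, since $x \sim \mathcal{N}(0, M_i)$, the inner product $\langle x, u \rangle$ is a centered univariate Gaussian with variance $\sigma^2 := u^T M_i u$. By \cref{cor:covar} and the hypothesis $\|u\|_2 \le 1$, this variance satisfies
\[\sigma^2 = u^T M_i u \le u^T (Lc I) u = Lc \|u\|_2^2 \le Lc.\]

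Next I would reduce to a standard Gaussian by writing $\langle x, u \rangle = \sigma Z$ for $Z \sim \mathcal{N}(0,1)$. Then $\langle x, u \rangle^2/(4Lc) = (\sigma^2/(4Lc)) Z^2 \le Z^2/4$ pointwise, using $\sigma^2 \le Lc$. Consequently,
\[\mathbb{E}_{x \sim \mathcal{N}(0,M_i)}\bigl[e^{\langle x, u\rangle^2/(4Lc)}\bigr] \le \mathbb{E}_{Z \sim \mathcal{N}(0,1)}\bigl[e^{Z^2/4}\bigr].\]

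Finally I would invoke the standard formula for the moment generating function of a chi-squared random variable: for $t < 1/2$,
\[\mathbb{E}_{Z \sim \mathcal{N}(0,1)}\bigl[e^{tZ^2}\bigr] = \frac{1}{\sqrt{1-2t}},\]
which is a direct integration against the Gaussian density. Setting $t = 1/4$ gives $1/\sqrt{1/2} = \sqrt{2}$, yielding the desired bound.

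There is no real obstacle here; the lemma is essentially a packaging of the trivial bound on Gaussian chi-squared moments combined with the uniform operator bound $M_i \preceq LcI$ from \cref{cor:covar}. The only point that deserves any care is confirming that the constant $4$ in the exponent denominator was chosen precisely so that the rescaled exponent lands at the critical threshold $t = 1/4 < 1/2$ where the MGF is still finite, producing the clean $\sqrt{2}$ on the right-hand side.
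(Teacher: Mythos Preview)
Your proof is correct and follows essentially the same approach as the paper: identify $\langle x,u\rangle$ as a centered Gaussian with variance $u^TM_iu\le Lc$, reduce to the worst case of variance $Lc$, and compute the resulting chi-squared moment. The only cosmetic difference is that the paper phrases the reduction step as ``$\mc{N}(0,Lc)$ is a spread of $\mc{N}(0,u^TM_iu)$'' together with \cref{lemma:average} applied to the convex function $e^{x^2/(4Lc)}$, whereas you use a direct pointwise scaling $\langle x,u\rangle=\sigma Z$ and monotonicity in $\sigma^2$.
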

\begin{proof}
Note that $\l x,u\r$ is distributed as $\mc{N}(0,u^TM_iu)$ and $u^TM_iu \le Lc$ by \cref{cor:covar}. This implies that $\mc{N}(0,Lc)$ is a spread of $\mc{N}(0,u^TM_iu)$. The result then follows by noting that 
\[\mb{E}_{x\sim \mc{N}(0,u^TM_iu)}\big[e^{x^2/(4Lc)}] \le \mb{E}_{x\sim \mc{N}(0,Lc)}\big[e^{x^2/(4Lc)}] = \sqrt{2}, \]
where we have used \cref{lemma:average} on the convex function $e^{x^2/(4LC)}.$
\end{proof}
We are now ready to complete the proof of \cref{thm:balance} by combining the fact that $\mc{D}_i$ is spread by $\mc{N}(0,M_i)$ with the tail bounds in \cref{lemma:gauss-bound}.
\begin{proof}[Proof of \cref{thm:balance}]
It suffices to bound the total amount of additional corrupted mass in $\mc{D}_i$ compared to $\mc{D}_{i-1}$. To bound this note for any vector $u \in \mb{R}^n$ with $\norm{u}_2 \le 1$ that
\begin{align} &\mr{Pr}_{w \sim \mc{D}_i}\left[|\l w, u\r| > c\right] \le e^{-c^2/(4Lc)}\mb{E}_{w \sim \mc{D}_i}\left[e^{\l w, u\r^2/(4Lc)}\right] \label{eq:gauss1}
\\ &\le e^{-c/4L}\mb{E}_{w \sim \mc{N}(0,M_i)}\left[e^{\l w, u\r^2/(4Lc)}\right] \le \sqrt{2}e^{-c/4L} \le (2nt)^{-1}\delta \label{eq:gauss2}
\end{align}
by the choice of $c$. Here, we have used that $\mc{N}(0,M_i)$ is a spread of $\mc{D}_i$ through \cref{lemma:spreadmain}, \cref{lemma:average} on the convex function  $e^{\l x,u\r^2/(4LC)}$, and \cref{lemma:gauss-bound}. Now, at step $i \in [t]$ union bound over the choices $u = v_i, e_1, e_2, \ldots, e_n$ where $e_j$ denotes the unit basis vector for coordinate $j$.
\end{proof}

\begin{proof}[Proof of \cref{thm:komlos}]
Modify \cref{algo:balance} to use $c = 30\log(t/\delta)$ for $\delta = \frac{1}{\mr{poly}(t)}$ and note that we no longer need to maintain that $\snorm{w_i}_{\infty}\le c$ at every step and instead only at the end. As in \cref{eq:gauss1,eq:gauss2} we have that
\[ \Pr_{w_{i-1}\sim \mc{D}_{i-1}}[|w_{i-1},v_i| > c] \le \sqrt{2}e^{-c/4L} \le \delta/t. \]
Therefore, a union bound over $i \in [t]$ shows that the algorithm does not fail for steps $i \in [t]$ with probability $1-\delta.$ Again, as in \cref{eq:gauss1,eq:gauss2} for any basis vector $e_j$ we have that
\begin{align*}
&\Pr_{w_t \sim \mc{D}_t}[|w_t,e_j| > \sqrt{8cL\log n}] \le \sqrt{2}e^{-8cL \log n / 4cL} \le \frac{2}{n^2}.
\end{align*}
Now union bounding over all $j \in [n]$ using that $\sqrt{8cL \log n} = O(\sqrt{\log t \cdot \log n})$ gives the bound.
\end{proof}

\section{Applications}
\label{sec:applications}
We can obtain improved bounds for several geometric discrepancy problems given in \cite{BJSS19}. Additionally, given \cref{thm:balance} our algorithms are simpler, and do not require the Haar basis / wavelets used in \cite{BJSS19}.
\subsection{Interval discrepancy.}
The $d$-dimensional interval discrepancy problem is to assign weights $\eps_i \in \{-1,1\}$ to vectors $x_1, \cdots, x_t \in [0,1]^d$ to minimize the discrepancy
\[ \disc_I^i(k) := \left|\eps_11_I(x_1(k)) + \eps_21_I(x_2(k)) + \cdots + \eps_i1_I(x_i(k)) \right| \] over all intervals $I \subseteq [0,1]$ and $i\in[t], k \in [d].$

Applying \cref{thm:balance} gives bounds for the $d$-dimensional interval discrepancy problem matching the known lower bounds up to $\mr{poly}(\log dt)$ factors shown in \cite{BJSS19} Theorem 1.2. Additionally, our result works when $x$ is sampled from an arbitrary known distribution on $[0,1]^d$, instead of only in the case where the distribution is uniform.
\begin{theorem}
\label{thm:interval}
There is an online algorithm which for vectors $x_1, x_2, \dots, x_t$ chosen from a known distribution $\frak{p}$ on $[0,1]^d$ maintains $\disc_i^t(k) = O(\sqrt{d}\log^2(dt/\delta))$ for all intervals $I \subseteq [0,1]$ and $i\in[t], k \in [d]$ with probability $1-\delta.$
\end{theorem}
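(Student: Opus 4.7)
The plan is to reduce interval discrepancy to the vector-balancing guarantee of \cref{thm:balance}, applied to a dyadic encoding of the input points, and then to control $\disc_I^i(k)$ by invoking the spreading/Gaussian argument of \cref{sec:analysis} applied to test directions beyond the standard basis. First I would reduce to uniform marginals: since the distribution $\frak{p}$ is known, I replace each input $x_j$ by $y_j := (F_1(x_j(1)), \ldots, F_d(x_j(d)))$, where $F_k$ is the marginal CDF of $\frak{p}$ on coordinate $k$. Then $y_j(k)$ is uniform on $[0,1]$ and, because $F_k$ is monotone and maps intervals to intervals, bounding $\disc_I^i(k)$ against arbitrary intervals for the original $x_j$'s is equivalent to bounding it against arbitrary intervals for the $y_j$'s.

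Next I would fix a discretization level $\Lambda := \lceil \log_2(dt/\delta) \rceil$ and define $v_y \in \mb{R}^N$, with $N = O(d \cdot 2^\Lambda) = \mr{poly}(dt/\delta)$, by setting the coordinate indexed by $(k, J)$ equal to $1_J(y(k))$ for every $k \in [d]$ and every dyadic sub-interval $J \subseteq [0,1]$ at scale $j \in \{0, 1, \ldots, \Lambda\}$. Since $y(k)$ lies in exactly one dyadic interval at each scale, $\|v_y\|_2^2 = d(\Lambda + 1)$, so I would run $\Balance(\tilde v_{y_1}, \ldots, \tilde v_{y_t}, \delta/2)$ on $\tilde v_{y_j} := v_{y_j}/\sqrt{d(\Lambda+1)}$, producing signs $\eps_j$ and partial sums whose rescaling $w_i := \sqrt{d(\Lambda+1)}\,\tilde w_i$ has $(k, J)$-coordinate equal to $\disc_J^i(k)$. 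For any interval $I \subseteq [0,1]$ with endpoints at multiples of $2^{-\Lambda}$, the standard dyadic decomposition writes $1_I = \sum_{\ell=1}^m \alpha_\ell 1_{J_\ell}$ with $m = O(\Lambda)$ and $\alpha_\ell \in \{-1, +1\}$; the associated test vector $u_{I,k} \in \mb{R}^N$, which places $\alpha_\ell$ in coordinate $(k, J_\ell)$ and zero elsewhere, then satisfies $\|u_{I,k}\|_2 = O(\sqrt{\Lambda})$ and $\langle w_i, u_{I,k}\rangle = \disc_I^i(k)$.

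The heart of the argument is to bound $\langle \tilde w_i, u_{I,k}\rangle$ directly. By \cref{lemma:spreadmain}, $\mc{N}(0, M_i)$ is a spread of the distribution $\mc{D}_i$ of $\tilde w_i$, so applying \cref{lemma:average} to the convex function $x \mapsto \exp(\ang{x, u_{I,k}}^2/(4Lc\|u_{I,k}\|_2^2))$ together with the Gaussian computation of \cref{lemma:gauss-bound} applied to the unit vector $u_{I,k}/\|u_{I,k}\|_2$ yields the exponential Markov bound
\[
\Pr_{\tilde w \sim \mc{D}_i}\!\bigl[|\ang{\tilde w, u_{I,k}}| > s\bigr] \le \sqrt{2}\exp\!\bigl(-s^2/(4Lc\|u_{I,k}\|_2^2)\bigr).
\]
Choosing $s = \Theta(\|u_{I,k}\|_2\sqrt{c\log(dt/\delta)}) = O(\log^{3/2}(dt/\delta))$ makes this probability at most $\delta/\mr{poly}(dt)$, and a union bound over $i \in [t]$, $k \in [d]$, and the $\mr{poly}(dt/\delta)$ grid-aligned intervals $I$ produces $|\ang{\tilde w_i, u_{I,k}}| = O(\log^{3/2}(dt/\delta))$ simultaneously. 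Rescaling by $\sqrt{d(\Lambda+1)}$ gives $\disc_I^i(k) = O(\sqrt{d}\log^2(dt/\delta))$ for all grid-aligned $I$. Arbitrary intervals are then handled by rounding their endpoints to the nearest $2^{-\Lambda}$-grid point: by the uniformity of the marginal of each $y_j(k)$, enlarging $\Lambda$ by $O(\log(dt/\delta))$ makes the total length of boundary zones $o(\delta/t)$, so with probability $1-\delta/2$ no $y_j(k)$ falls in any boundary, eliminating the rounding error.

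The conceptual obstacle is noticing that the analysis of \cref{thm:balance} is in fact a tail bound on $\ang{w_i, u}$ for \emph{every} direction $u$, paying only $\|u\|_2$ on the right-hand side, and not merely on $\|w_i\|_\infty$; this is already latent in \cref{lemma:spreadmain} combined with \cref{lemma:gauss-bound}, but needs to be invoked for the interval-induced directions $u_{I,k}$. The reason this gives $\log^2$ rather than the naive $\log^{5/2}$ bound that one obtains from decomposing $\disc_I^i(k)$ into $O(\Lambda)$ dyadic pieces and bounding each via $\|w_i\|_\infty = O(\sqrt{d\Lambda}\log(dt/\delta))$ is precisely that applying the spreading inequality in the combined direction $u_{I,k}$ saves a factor of $\sqrt{\Lambda}$ over a triangle-inequality argument.
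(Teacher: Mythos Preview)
Your proposal is correct and follows essentially the same approach as the paper's sketch: encode each point by its dyadic-interval memberships (you via a CDF transform to uniform marginals, the paper via a quantile partition---these are equivalent viewpoints), run \textsc{Balance} on the normalized encodings, and then crucially apply the spreading/Gaussian tail bound from \cref{lemma:spreadmain} and \cref{lemma:gauss-bound} directly to the interval test directions $u_{I,k}$ rather than coordinatewise, which is exactly the $\sqrt{\log}$ saving the paper exploits. Your rounding-error sentence (``no $y_j(k)$ falls in any boundary'') is imprecise as written---the boundary of $I\triangle\hat I$ varies with $I$ and sweeps out all of $[0,1]$---but the paper's sketch is equally brief here (``plus small error terms on the ends''), and in either case a Chernoff bound on the number of points per finest-scale cell completes the argument.
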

\begin{proof}[Sketch]
For simplicity we consider the case when the distribution $\frak{p}$ is absolutely continuous with respect to the Lebesgue measure; this assumption can be removed with some care. Define the quantiles \[ q_k^j := \inf \{ y : \mr{Pr}_{x \sim \frak{p}}[x(k) \le y] \ge j/(dt) \} \] for $k \in [d], 1 \le j \le dt.$ Let $0 = r_0 \le r_1 \le \dots \le r_{d^2t} = 1$ be a sorting of all the quantiles $q_k^j$ for $k \in [d], 1 \le j \le dt.$ By increasing $d,t$ by constants we can assume that $d^2t = 2^K$ for some integer $K$. Now, consider the set of intervals \[ J := \{ [r_{a \cdot 2^b}, r_{(a+1) \cdot 2^b}] : 0 \le a < 2^{K-b}, 0 \le b \le K \}. \]
Note that each point $x \in [0,1]$ is in $O(d \log dt)$ total such intervals, over all $d$ dimensions we wish to consider. Therefore, the proof of Theorem \ref{thm:balance} shows that the distribution of the discrepancy over the dyadic intervals is $O(\sqrt{d}\log(dt/\delta))$-subgaussian. Now, every interval can be written as a union of $O(\log dt)$ intervals in $J$, plus small error terms on the ends. Therefore, the corresponding vector has $\ell_2$ norm bounded by $O(\log^{1/2}(dt))$ Because there are $O(d^3t^2)$ total intervals, all have discrepancy \[ O(\sqrt{d}\log(dt/\delta)\log^{1/2}(dt)\log^{1/2}(d^4t^2/\delta)) = O(\sqrt{d}\log^2(dt/\delta)) \] with high probability as desired.
\end{proof}

\subsection{Online Tusn\'{a}dy's problem.}
Tusn\'{a}dy's problem is the following -- given points \\ $x_1, \cdots, x_t \in \mb{R}^d$ to minimize the maximum discrepancy over boxes $B \subseteq \mb{R}^n:$
\[ \disc(B) := \left|\eps_11_B(x_1) + \eps_21_B(x_2) + \cdots + \eps_t1_B(x_t)\right|. \] The best known upper bound is $O_d(\log^{d-1/2} t)$ \cite{Nikolov17} and the best known lower bound is $\Omega_d(\log^{d-1} t)$ \cite{MN15}. One can ask an analogous online version, which is to minimize the maximum over boxes $B$ and $k \in [t]$ of \[ \disc_i(B) := \left|\eps_11_B(x_1) + \eps_21_B(x_2) + \cdots + \eps_i1_B(x_i)\right|. \]

We can apply \cref{thm:balance} to the online Tusn\'{a}dy problem in the case where the vectors $x_1,\dots,x_t$ are sampled from a known distribution $\frak{p}$ on $[0,1]^n$. Our bounds are more general and improve over the $O_d(\log^{2d+1} t)$ bounds of \cite{BJSS19,Dwivedi19}, which only worked in the case of product distributions.
\begin{theorem}
\label{thm:tusnady}
There is an online algorithm which for vectors $x_1, x_2, \dots, x_t$ chosen from a known distribution $\frak{p}$ on $[0,1]^d$ maintains $\disc_i(B) = O_d(\log^{d+1}(dt/\delta))$ for all boxes $B \subseteq [0,1]^d$ and $i \in [t]$ with probability $1-\delta.$
\end{theorem}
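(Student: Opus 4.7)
The plan is to adapt the proof of \cref{thm:interval} to $d$ dimensions by replacing dyadic intervals with \emph{dyadic boxes}, i.e., products of one dyadic interval per coordinate. For each coordinate $k \in [d]$, define the marginal quantiles $q_k^j := \inf\{y : \mr{Pr}_{x \sim \frak{p}}[x(k) \le y] \ge j/(dt)\}$ for $1 \le j \le dt$; after padding $d,t$ by constants we may assume $dt$ is a power of two, and let $J_k$ be the resulting family of $O(dt)$ dyadic intervals along coordinate $k$. The test family is the set of all products $D = I_1 \times \cdots \times I_d$ with $I_k \in J_k$; there are $n = O((dt)^d)$ such dyadic boxes. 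I would run $\Balance$ on the vectors $v_i \in \mb{R}^n$ whose $D$-coordinate is $1_D(x_i)$.

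Each point $x \in [0,1]^d$ lies in $O(\log dt)$ intervals of $J_k$ in each coordinate, and hence in $O(\log^d dt)$ dyadic boxes, giving $\|v_i\|_2^2 = O(\log^d dt) =: V^2$. Rescaling to $v_i/V$ and applying the spreading machinery (\cref{lemma:spreadmain,lemma:gauss-bound,lemma:average}) shows that for any unit vector $u \in \mb{R}^n$ the linear functional $\l w_i, u\r$ is subgaussian with parameter $O(V\sqrt{\log(nt/\delta)}) = O_d(\log^{d/2}(dt)\sqrt{\log(dt/\delta)})$, exactly as in the interval case.

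Next I would decompose an arbitrary axis-aligned box $B = [l_1,r_1] \times \cdots \times [l_d,r_d] \subseteq [0,1]^d$ into a main term and a boundary correction: along each coordinate, $[l_k,r_k]$ is a disjoint union of $O(\log dt)$ intervals in $J_k$ together with at most two half-open residual strips at the endpoints, and taking products yields $O(\log^d dt)$ dyadic boxes plus $O(d \log^{d-1} dt)$ \emph{boundary slabs} of the form (residual strip in one coordinate) $\times$ (dyadic cells in the others). The main term contributes $\l w_i, u_B\r$ with $\|u_B\|_2 = O(\log^{d/2} dt)$, which is subgaussian with parameter $O_d(\log^d(dt) \sqrt{\log(dt/\delta)})$; union-bounding over the $O((dt)^{2d})$ distinct box equivalence classes and $i \in [t]$ contributes an additional $O(\sqrt{d\log(dt/\delta)})$ factor, producing the target $\disc_i(B) = O_d(\log^{d+1}(dt/\delta))$.

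The main obstacle is controlling the boundary-slab contribution cleanly. Each residual strip has $\frak{p}$-marginal measure at most $1/(dt)$ by the quantile construction, so each boundary slab has $\frak{p}$-measure $\le 1/(dt)$ and expected point count $\le 1/d$; a Chernoff plus union bound over all boxes and all time steps shows that every boundary slab simultaneously contains $O_d(\log(dt/\delta))$ points, whose naive signed contribution is therefore subsumed by the main term. This is the direct $d$-dimensional analogue of the ``small error terms on the ends'' mentioned in the sketch of \cref{thm:interval}, but the bookkeeping across the $2d$ faces of each box is the one genuinely delicate part of the argument.
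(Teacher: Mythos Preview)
Your proposal is correct and follows essentially the same route as the paper's sketch: form dyadic boxes as products of per-coordinate dyadic quantile intervals, observe each point lies in $O_d(\log^d dt)$ of them so the \Balance\ vectors have $\ell_2$ norm $O_d(\log^{d/2} dt)$, use the spreading machinery to get the corresponding subgaussian bound, decompose a general box into $O_d(\log^d dt)$ dyadic boxes plus thin boundary pieces, and union-bound over $O_d((dt)^{2d})$ box classes; the paper likewise defers the boundary contribution to a one-line ``rounding error handled with a Chernoff bound.'' One small bookkeeping point: your decomposition as stated (main term plus slabs with a residual in \emph{exactly one} coordinate) omits the corner pieces with residuals in several coordinates, but since the whole error region $B\setminus B'$ sits inside the union of the $2d$ face slabs, each of marginal $\frak{p}$-mass at most $1/(dt)$, the Chernoff argument you outline still bounds the total boundary contribution by $O_d(\log(dt/\delta))$, which is subsumed by the main term.
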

\begin{proof}[Sketch]
As before, for simplicity, we only consider the case where $\frak{p}$ is absolutely continuous with respect to the Lebesgue measure. Compute the quantiles as done in the proof of \cref{thm:interval}. Now, build the dyadic decomposition as in \cref{thm:interval} one dimension at a time recursively to build dyadic boxes. Each point $x \in [0,1]^d$ is in $O_d(\log^d(dt))$ such dyadic boxes. Thus, the proof of \cref{thm:balance} shows that the distribution of discrepancy over the dyadic intervals is $O_d(\log^\frac{d+1}{2}(dt/\delta))$ subgaussian. 
Now, every box can be written as the union of $O_d(\log^d(dt))$ dyadic boxes, plus small errors. Therefore, the corresponding vecotr has $\ell_2$ norm at most $O_d(\log^{d/2}(dt))$. As there are $O_d((dt)^{2d})$ such boxes to consider, we achieve discrepancy at most
\[ O_d(\log^\frac{d+1}{2}(dt/\delta) \log^{d/2}(dt) \log^{1/2}(dt/\delta)) = O_d(\log^{d+1}(dt/\delta)) \] with high probability.
The rounding error can be handled with a Chernoff bound.
\end{proof}
We end by noting that our bounds for Theorem \ref{thm:interval} and \ref{thm:tusnady} work in the offline setting as well. Indeed, knowing the input points allows to form the dyadic partitions just as described in the proofs. 
For the interval disrepancy problem in the offline setting, an upper bound of $O(\sqrt{d}\log t)$ \cite{SST97} is known, as well as lower bounds of $\Omega(\log t)$ \cite{NNN12,Franks18} and $\Omega(\sqrt{d})$. Therefore, our bound for the interval discrepancy problem is only a single $O(\log(dt))$ factor off the best known offline bound, and a $O(\log^2(dt))$ factor off the known lower bound.
Additionally, our bound is only a $\log^{3/2}(dt)$ factor off the best known \emph{offline} bound for Tusn\'{a}dy's problem, and a $\log^2(dt)$ factor off the known lower bound.
\section*{Acknowledgements}
The authors thank Noga Alon, Vishesh Jain, Mark Sellke, Aaron Sidford, and Yufei Zhao for helpful comments regarding the manuscript. We thank Nihkil Bansal for observations which led to improvements of several logarithmic factors in \cref{thm:interval} and \cref{thm:tusnady}. R.A. is supported by an NSF Graduate Research Fellowship. Y.L. was supported by the Department of Defense (DoD) through the National Defense Science and Engineering Graduate Fellowship (NDSEG) Program.
\bibliographystyle{amsplain0.bst}
\bibliography{main.bib}

\end{document}